\documentclass[conference]{IEEEtran}
\usepackage[margin=0.7in]{geometry}
\usepackage[english]{babel}
\usepackage{amsmath,amsthm}
\usepackage{amsfonts}
\usepackage{extarrows}
\usepackage{amssymb}
\usepackage{dblfloatfix}
\twocolumn
\newtheorem{theorem}{Theorem}
\newtheorem{corollary}{Corollary}
\newtheorem{lemma}{Lemma}

\theoremstyle{definition}

\theoremstyle{remark}
\newtheorem{remark}{Remark}


\begin{document}
\title{Key Capacity with Limited One-Way Communication for Product Sources}
\author{\IEEEauthorblockN{Jingbo Liu~~~~~~~~Paul Cuff~~~~~~~~Sergio Verd\'{u}}
\IEEEauthorblockA{Dept. of Electrical Eng., Princeton University, NJ 08544\\
\{jingbo,cuff,verdu\}@princeton.edu}}
\maketitle
\begin{abstract}
We show that for product sources, rate splitting is optimal for secret key agreement using limited one-way communication at two terminals. This yields an alternative proof of the tensorization property of a strong data processing inequality originally studied by Erkip and Cover and amended recently by Anantharam et al. We derive a `water-filling' solution of the communication-rate--key-rate tradeoff for two arbitrarily correlated vector Gaussian sources, for the case with an eavesdropper, and for stationary Gaussian processes.
\end{abstract}
\IEEEpeerreviewmaketitle
\section{Introduction}\label{sec1}
The fundamental limit on the amount of secret key (or common randomness) that can be generated by two terminals which observe correlated discrete memoryless sources was studied in \cite{csiszar2004secrecy},\cite{ahlswede1998common}, where single-letter solutions were derived for the class of protocols with one-way communication. However, in many practical applications, the key capacity is not known, since the optimizations over auxiliary random variables in those single-letter formulas are usually hard to solve. In \cite{watanabe} the fundamental limit was extended to sources with continuous alphabets, and it was shown that for vector Gaussian sources it suffices to consider auxiliary random variables which are jointly Gaussian with the sources. Thus the capacity region for vector Gaussian sources was posed as a matrix optimization problem. Still, an explicit formula for the key capacity was not derived except for scalar Gaussian sources.

In this paper we provide the explicit formula for key capacity of vector Gaussian sources by considering what turns out to be more general: the key capacity of arbitrary product sources. Specifically, suppose terminals A, B and an eavesdropper observe discrete memoryless vector sources $\mathbf{X}=(X_i)_{i=1}^n$, $\mathbf{Y}=(Y_i)_{i=1}^n$ and $\mathbf{Z}=(Z_i)_{i=1}^n$ respectively, where
\begin{align}\label{assump}
P_{\mathbf{XY}}=\prod_{i=1}^{n} P_{X_iY_i},\quad P_{\mathbf{XZ}}=\prod_{i=1}^{n} P_{X_iZ_i}.
\end{align}
We call a source of this kind a \emph{product source} because of the structure of its joint probability distribution. The maximal rate of secret key achievable as a function of public communication rate $r$ from A to B is denoted as $R(r)$. We show that
\begin{align}\label{disp}
R(r)=\max_{\sum_{i=1}^n r_i\le r}\sum_{i=1}^n R_{i}(r_{i}),
\end{align}
where $R_{i}(r_{i})$ is the maximal achievable key rate given a communication rate of $r_i$ for the $i$'th source triple: $(X_i,Y_i,Z_i)$. This is analogous to a result in rate distortion theory about rate for a product source with a sum distortion measure \cite{shannon1959coding}, where Shannon showed that the rate distortion function is derived by summing the rates and distortions of points in the curves with the same slope.

In the case of jointly vector Gaussian sources without an eavesdropper (or with an eavesdropper but under suitable conditions), one can always apply separate invertible linear transforms on the vectors observed at A and B so that the source distribution is of the form in (\ref{assump}), thus deriving an explicit formula of $R(r)$ utilizing corresponding results of scalar Gaussian sources. The solution displays a `water filling' behavior similar to the rate distortion function of vector Gaussian sources \cite{cover2012elements}. We shall also discuss the extension to stationary Gaussian processes.

There is a curious connection between our result about $R(r)$ for product sources and the tensorization property of a strong data processing inequality originally studied by Erkip and Cover and amended recently by Anantharam et al. \cite{anantharam2013maximal}. Suppose $P_{XY}$ is given. In \cite{erkip1998efficiency} it was mistakenly claimed that $s^*(X;Y):=\sup_{U-X-Y,I(U;X)\neq0}\frac{I(U;Y)}{I(U;X)}=\rho_m^2(X;Y)$, where $\rho_m^2(X;Y)$ denotes the maximal correlation coefficient \cite{witsenhausen1975sequences}. In fact, \cite{anantharam2013maximal} shows that this is not always true and gives a general while less explicit simplification for $s^*(X;Y)$:
\begin{align}\label{ssup}
s^*(X;Y)=\sup_{Q_X\neq P_X}\frac{D(Q_Y||P_Y)}{D(Q_X||P_X)}.
\end{align}
Nevertheless, $\rho_m^2(X;Y)$ and $s^*(X;Y)$ do agree for some simple distributions of $P_{XY}$ such as Gaussian and binary with equiprobable marginals, and they both fulfill an important property called tensorization. Moreover, they are closely linked to the problem of key generation \cite{witsenhausen1975sequences}\cite{zhao}.\footnote{For the reason we just discussed, the $\rho^2_m(X;Y)$ in the expressions of efficiency functions in \cite{zhao} should be replaced by $s^*(X;Y)$.} To add one more connection between $s^*(X;Y)$ and key generation, we demonstrate that (\ref{disp}) implies the tensorization property of $s^*(X;Y)$.


\section{Preliminaries}\label{sec2}
Consider the model explained in Section \ref{sec1} with the sources distributed as in (\ref{assump}). We use $N$ to denote block length since $n$ has been reserved for the length of the vector in (\ref{assump}). The standard notation $\mathbf{X}_1^N=\mathbf{X}^N$ will be used for the block $(\mathbf{X}_i)_{i=1}^N$.

Upon receiving $\mathbf{X}^N$, terminal A computes an integer $K_1=K_1(\mathbf{X}^N)\in\mathcal{K}_1$ and sends a message $W=W(\mathbf{X}^N)\in\mathcal{W}$ through a noiseless public channel to terminal B. Then B computes the key $K_2=K_2(W(\mathbf{X}^N),\mathbf{Y}^N)$ based on available information. A rate pair $(r,R)$ is said to be achievable if it can be approached by a sequence of schemes satisfying the following conditions on the probability of agreement and security:
\begin{align}
\lim_{N\to\infty}\mathbb{P}(K_1\neq K_2)&=0,\\
\lim_{N\to\infty}[\log|\mathcal{K}_1|-H(K_1|W,\mathbf{Z}^N)] &=0,\\
\limsup_{N\to\infty}\frac{1}{N}\log|\mathcal{W}|&\le r,\\
\liminf_{N\to\infty}\frac{1}{N}\log|\mathcal{K}_1|&\ge R.
\end{align}
The function
\begin{align}
R(r):=\sup\{R:(r,R)\textrm{ is achievable}\}
\end{align}
characterizes the maximal possible key rate with a certain public communication rate.

\section{Main Results}
\subsection{Initial Efficiency in Key Generation}
Fix $P_{XYZ}$. Define
\begin{align}
&s^*_Z(X;Y)\nonumber\\
=&\sup_{U,V}\frac{I(V;Y|U)-I(V;Z|U)}
{I(V;X|U)-I(V;Z|U)+I(U;X)-I(U;Y)},\label{ss}
\end{align}
where the supremum is over all $(U,V)$ such that $(U,V)-X-(Y,Z)$ and that the denominator in (\ref{ss}) doesn't vanish. Note that the denominator is always nonnegative; if it vanishes for all $U,V$, then so does the numerator and we set $s^*_Z(X;Y)=0$. $s^*_Z(X;Y)$ is closely related to the `initial efficiency' of key generation, defined as $\lim_{r\to 0}\frac{R(r)}{r}$. In the special case of no eavesdropper, this is related to the result in \cite{zhao}, which uses the incorrect constant $\rho^2_m(X;Y)$ as we mentioned earlier. The following result gives some basic properties of $s^*_Z(X;Y)$.
\begin{theorem}\label{thm1}~\\
1) $0\le s^*_Z(X;Y)\le 1$.\\
2) $s^*_Z(X;Y)$ tensorizes, in the sense that for product sources,
\begin{align}
s^*_{Z^n}(X^n;Y^n)=\max_{1\le i\le n}s^*_{Z_i}(X_i;Y_i).
\end{align}
3) $s^*_Z(X;Y)$ is linked to the initial efficiency of key generation by
\begin{align}
\lim_{r\to 0}\frac{R(r)}{r}=\sup_{r> 0}\frac{R(r)}{r}=\frac{s^*_Z(X;Y)}{1-s^*_Z(X;Y)}.
\end{align}
4)
\begin{align}\label{e18}
&s^*_Z(X;Y)\nonumber\\
=&\sup_{Q_{VX}}\frac{I(\bar{V};\bar{Y})-I(\bar{V};\bar{Z})}
{I(\bar{V};\bar{X})-I(\bar{V};\bar{Z})+D(Q_X||P_X)-D(Q_Y||P_Y)},
\end{align}
where $\bar{V},\bar{X},\bar{Y},\bar{Z}$ has joint distribution $P_{\bar{V}\bar{X}\bar{Y}\bar{Z}(v,x,y,z)}=Q_{VX}(v,x)P_{YZ|X}(y,z|x)$. The supremum is over all $Q_{VX}$ such that the above denominator does not vanish.

Computation can be further simplified for degraded sources $X-Y-Z$:
\begin{align}
s^*_Z(X;Y)=&\sup_{U}\frac{I(U;Y|Z)}{I(U;X|Z)}\nonumber\\
=&\sup_{Q_X}\frac{D(Q_Y||P_Y)-D(Q_Z||P_Z)}{D(Q_X||P_X)-D(Q_Z||P_Z)}\label{eq19}
\end{align}
where $Q_{XYZ}=Q_X P_{YZ|X}$.
\end{theorem}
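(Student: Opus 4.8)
The plan is to anchor the whole theorem on a single algebraic identity. Writing $\mathrm{num}=I(V;Y|U)-I(V;Z|U)$ and $\mathrm{den}$ for the denominator in (\ref{ss}), the chain rule gives $\mathrm{den}-\mathrm{num}=I(V;X|U)-I(V;Y|U)+I(U;X)-I(U;Y)=I(UV;X)-I(UV;Y)$. For part 1, the upper bound $s^*_Z\le 1$ is then equivalent to $\mathrm{num}\le\mathrm{den}$, i.e. $I(UV;Y)\le I(UV;X)$, which is the data processing inequality on $(U,V)-X-Y$; the nonnegativity of $\mathrm{den}$ asserted in the text is the sum of the two data processing inequalities $I(V;Z|U)\le I(V;X|U)$ (from $V-(U,X)-Z$) and $I(U;Y)\le I(U;X)$ (from $U-X-Y$). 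I would defer the lower bound $s^*_Z\ge 0$ to part 3.

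For part 3, I would first invoke the known single-letter characterization of the one-way, rate-limited secret-key capacity for a discrete memoryless source with an eavesdropper (\cite{csiszar2004secrecy},\cite{ahlswede1998common},\cite{watanabe}), namely $R(r)=\max\{I(V;Y|U)-I(V;Z|U)\}$ over $(U,V)-X-(Y,Z)$ subject to $I(UV;X)-I(UV;Y)\le r$; here $\mathrm{num}$ is the key rate while $\mathrm{den}-\mathrm{num}$ is exactly the communication rate. Since $R$ is concave with $R(0)=0$, the secant slope $R(r)/r$ is nonincreasing, so $\lim_{r\to 0}R(r)/r=\sup_{r>0}R(r)/r=\sup_{U,V}\frac{\mathrm{num}}{\mathrm{den}-\mathrm{num}}$. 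Using that $f(t):=t/(1-t)$ is increasing, $\sup\frac{\mathrm{num}}{\mathrm{den}-\mathrm{num}}=\sup f(\mathrm{num}/\mathrm{den})=f(s^*_Z)=\frac{s^*_Z}{1-s^*_Z}$, which is part 3. The lower bound in part 1 follows, since $R(r)/r\ge 0$ forces $f(s^*_Z)\ge 0$, hence $s^*_Z\ge 0$.

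For the tensorization in part 2 I would exploit the product formula (\ref{disp}) rather than a direct single-letterization. From (\ref{disp}), allocating all of the rate to the single most efficient coordinate, and conversely bounding the slope of any allocation by a mediant inequality, shows $\sup_{r>0}R(r)/r=\max_i\sup_{r_i>0}R_i(r_i)/r_i$ for the product source. Applying part 3 to the product and to each coordinate converts this into $f(s^*_{Z^n}(X^n;Y^n))=\max_i f(s^*_{Z_i}(X_i;Y_i))$, and since $f$ is an increasing bijection of $[0,1]$ onto $[0,\infty]$, we have $\max_i f=f(\max_i)$ and $f$ is invertible, giving $s^*_{Z^n}(X^n;Y^n)=\max_i s^*_{Z_i}(X_i;Y_i)$. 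This is the promised alternative proof of tensorization.

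Part 4 is a change-of-measure reduction. Conditioning on $U=u$, the Markov chain makes $(V,X,Y,Z)$ distributed as $Q_{VX}P_{YZ|X}$ with $Q_{VX}=P_{VX|U=u}$, and both $\mathrm{num}$ and $\mathrm{den}$ become $U$-averages of the integrands in (\ref{e18}) (using $I(U;X)=\mathbb{E}_U D(P_{X|U}\|P_X)$ and the data processing bound $D(Q_Y\|P_Y)\le D(Q_X\|P_X)$ to keep each integrand's denominator nonnegative). A mediant inequality gives $s^*_Z\le$ the right side of (\ref{e18}); the reverse direction I would obtain by perturbation, taking $U$ binary with vanishing probability $\delta$ on the value carrying a fixed $Q_{VX}$, so that $\mathrm{num}$ and $\mathrm{den}$ scale like $\delta$ while the residual divergence terms are $O(\delta^2)$, and the ratio tends to the (\ref{e18}) integrand. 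For the degraded case $X-Y-Z$, setting $W=(U,V)$ yields $\mathrm{num}=I(W;Y|Z)-I(U;Y|Z)$ and $\mathrm{den}=I(W;X|Z)-I(U;Y|Z)$; since $I(W;Y|Z)\le I(W;X|Z)$ by data processing, the elementary inequality $\frac{a-b}{c-b}\le\frac{a}{c}$ (valid for $0\le b$ and $a\le c$) shows the subtracted auxiliary may be taken trivial, reducing $s^*_Z$ to $\sup_U I(U;Y|Z)/I(U;X|Z)$, and a further change of measure yields (\ref{eq19}). I expect the main obstacle to be part 3: pinning down the exact single-letter region in the eavesdropper, rate-limited, one-way setting and justifying the limit interchange in the initial slope. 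Once that is secured, parts 1 and 2 become essentially formal, and part 4 is a standard, if delicate, perturbation argument.
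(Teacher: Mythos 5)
Your proposal is sound in outline and, for parts 3 and 4, follows the same route as the paper: part 3 is read off the known one-way rate region of \cite{csiszar2004secrecy} plus convexity, and part 4 uses the same two arguments the paper does (conditioning on $U=u$ together with a mediant inequality for one direction; a binary-$U$ mixture with vanishing weight $\alpha$ for the other). Your treatment of the degraded case of (\ref{eq19}) via $W=(U,V)$, the identities $\mathrm{num}=I(W;Y|Z)-I(U;Y|Z)$, $\mathrm{den}=I(W;X|Z)-I(U;Y|Z)$, and the inequality $\frac{a-b}{c-b}\le\frac{a}{c}$ is a genuine addition, since the paper omits that proof entirely. For part 2 you take a genuinely different route: the paper single-letterizes directly, applying Lemma~\ref{lem1} to the ratio defining $s^*_{Z^n}(X^n;Y^n)$ and finishing with a mediant inequality, whereas you pass through the operational statement (\ref{disp}) and part 3. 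This is precisely the ``alternative proof'' the paper itself advertises after Theorem~\ref{thm1}; note, however, that it is not logically lighter, because the paper's proof of (\ref{disp}) (Theorem~\ref{thm3}) rests on the very same Lemma~\ref{lem1}, so both roads go through the same single-letterization step.

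The one genuine gap is the lower bound $s^*_Z(X;Y)\ge 0$ in part 1, which you defer to part 3. The direction $\sup_{r>0}R(r)/r\le s^*_Z/(1-s^*_Z)$ of part 3 already presupposes $s^*_Z\ge 0$: since $R(r)\ge 0$ by definition (the key interval $[0,\mathrm{num}]$ contributes only when $\mathrm{num}\ge 0$), in the degenerate case where every admissible $(U,V)$ has $\mathrm{num}\le 0$ one gets $\sup_{r>0}R(r)/r=0$ while $s^*_Z/(1-s^*_Z)$ could a priori be negative, so the identity you invoke would fail rather than deliver nonnegativity --- the argument is circular exactly where it is needed. The paper closes this with a short direct case analysis: if some $U$ satisfies $I(U;X)>I(U;Y)$, choose $V$ independent of $(U,X,Y,Z)$ so that the numerator vanishes while the denominator is positive; otherwise $I(U;X)=I(U;Y)$ for every $U$, and the chain rule rewrites the numerator as $I(U,V;X)-I(U,V;Z)-I(U;X)+I(U;Z)=I(V;X|U)-I(V;Z|U)\ge 0$. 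You should insert this argument before, not after, part 3; with that repair the rest of your plan goes through.
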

\begin{proof}
1) From the data processing inequality the denominator in (\ref{ss}) is nonnegative, and $s^*_Z(X;Y)\le 1$. If there exists $U$ such that
$I(U;X)-I(U;Y)>0$, we can choose $V$ independent of $U,X,Y$ so that the numerator vanishes whereas the denominator is positive, which shows that $s^*_Z(X;Y)\ge0$. Otherwise if $I(U;X)-I(U;Y)=0$ for all $U$, the numerator will always be nonnegative:
\begin{align}
&I(V;Y|U)-I(V;Z|U)\nonumber\\
=&I(U,V;Y)-I(U,V;Z)-I(U;Y)+I(U;Z)\nonumber\\
=&I(U,V;X)-I(U,V;Z)-I(U;X)+I(U;Z).
\end{align}
Hence $s^*_Z(X;Y)\ge0$ always holds.

2) We only need to show that $s^*_Z(X^n;Y^n)\le \max_{1\le i\le n}s^*_{Z_i}(X_i;Y_i)$ since the other direction is obvious. For any $U,V$ such that $(U,V)-X_1^n-(Y_1^n,Z_1^n)$ and $I(U,V;X_1^n)-I(U,V;Y_1^n)>0$, $I(V;Y^n|U)-I(V;Z^n|U)>0$, let $U_1^n,V_1^n$ be as in Lemma \ref{lem1} in the appendix. Then
\begin{align}
&\frac{I(V;Y^n|U)-I(V;Z^n|U)}{I(U,V;X^n)-I(U,V;Y^n)}\label{leftmost}\\
\le& \frac{\sum_{i=1}^n [I(V_i;Y_i|U_i)-I(V_i;Z_i|U_i)]}{\sum_{i=1}^n [I(U_i,V_i;X_i)-I(U_i,V_i;Y_i)]}\nonumber\\
\le& \max_{i\in \mathcal{I}}\frac{I(V_i;Y_i|U_i)-I(V_i;Z_i|U_i)}
{I(U_i,V_i;X_i)-I(U_i,V_i;Y_i)}\nonumber\\
\le& \max_{1\le i\le n}\sup_{U_i,V_i}\frac{I(V_i;Y_i|U_i)-I(V_i;Z_i|U_i)}
{I(U_i,V_i;X_i)-I(U_i,V_i;Y_i)}
\end{align}
\begin{figure*}[b]
\hrulefill
\begin{align}
&\frac{I(V;Y|U)-I(V;Z|U)}
{I(V;X|U)-I(V;Z|U)+I(U;X)-I(U;Y)}\nonumber\\
=&\frac{\int[I(V;Y|U=u)-I(V;Z|U=u)]{\rm d}P_U(u)}{\int[I(V;X|U=u)-I(V;Z|U=u)
+D(P_{X|U=u}||P_X)-D(P_{Y|U=u}||P_Y)]{\rm d}P_U(u)}\label{equ18}\\
\le&\sup_{u}\frac{I(V;Y|U=u)-I(V;Z|U=u)}{I(V;X|U=u)-I(V;Z|U=u)
+D(P_{X|U=u}||P_X)-D(P_{Y|U=u}||P_Y)}\\
\le&\sup_{Q_{VX}}\frac{I(\bar{V};\bar{Y})-I(\bar{V};\bar{Z})}
{I(\bar{V};\bar{X})-I(\bar{V};\bar{Z})+D(Q_X||P_X)-D(Q_Y||P_Y)}.\label{equ20}
\end{align}
\end{figure*}
where $\mathcal{I}$ is the set of indices such that $I(U_i,V_i;X_i)-I(U_i,V_i;Y_i)\neq 0$, and the suprema are over all $U_i,V_i$ such that $(U_i,V_i)-X_i-(Y_i,Z_i)$ and $I(U_i,V_i;X_i)-I(U_i,V_i;Y_i)\neq 0$. Supremizing with respect to $U,V$ on (\ref{leftmost}) shows the tensorization property of $\frac{s^*_Z(X;Y)}{1-s^*_Z(X;Y)}$, which is equivalent to the tensorization property of $s^*_Z(X;Y)$.

3) The achievable region of $(r,R)$ is the union of
\begin{align}\label{region0}
&[I(U,V;X)-I(U,V;Y),\infty)\nonumber\\
&\times[0,I(V;Y|U)-I(V;Z|U)]
\end{align}
over all $U,V$ such that $(U,V)-X-(Y,Z)$ \cite{csiszar2004secrecy}. Thus $\sup_{r> 0}\frac{R(r)}{r}=\frac{s^*_Z(X;Y)}{1-s^*_Z(X;Y)}$ follows immediately from the definition of $s^*_Z(X;Y)$. The claim of
$
\lim_{r\downarrow0}\frac{R(r)}{r}=\sup_{r> 0}\frac{R(r)}{r}
$
follows from the convexity of the achievable rate region.

4) One direction of the inequality is obvious: if $U,V$ are such that $I(V;Y|U)-I(V;Z|U)\ge 0$, we have (see (\ref{equ18})-(\ref{equ20})).

Conversely, for any $Q_{VX}$, consider
\begin{align}
Q^{(1)}_{VXYZ}&=Q_{VX}P_{YZ|X},\\
Q^{(0)}_{VXYZ}&=P_V\cdot\frac{P_{XYZ}-\alpha Q^{(1)}_{XYZ}}{1-\alpha},\label{eq23}\\
P^{\alpha}_{XYZUV}(x,y,z,u,v)
=&(1-\alpha)Q^{(0)}_{VXYZ}(v,x,y,z)1_{u=0}\nonumber\\
&+\alpha Q^{(1)}_{VXYZ}(v,x,y,z)1_{u=1},
\end{align}
where $P_V$ is an arbitrary probability distribution on $\mathcal{V}$. We can assume that $P_{XYZ}(x,y,z)>0$ for all $x,y,z$ so that (\ref{eq23}) is a well-defined distribution for $\alpha>0$ small enough. Then, we can verify that $P^{\alpha}_{XYZ}=P_{XYZ}$, $(U,V)-X-(Y,Z)$, and
\begin{align}
&\lim_{\alpha\downarrow 0}\frac{I(V;Y|U)-I(V;Z|U)}
{I(V;X|U)-I(V;Z|U)+I(U;X)-I(U;Y)}\nonumber\\
=&\frac{I(\bar{V};\bar{Y})-I(\bar{V};\bar{Z})}
{I(\bar{V};\bar{X})-I(\bar{V};\bar{Z})+D(Q_X||P_X)-D(Q_Y||P_Y)},
\end{align}
where $P_{UVXYZ}:=P^{\alpha}_{UVXYZ}$. Thus (\ref{e18}) holds.

The proof of (\ref{eq19}) is omitted here due to space constraint.
\end{proof}
Thanks to Theorem \ref{thm1}, we can always eliminate one auxiliary r.v. if we only want to compute $s_Z^*(X;Y)$ instead of the rate region, which considerably reduces the dimension in the optimization problem. The interpretation of the tensorization of $s^*_Z(X;Y)$ is that, with small allowable public communication, it is always efficient to only use the best component of the product sources. Alternatively, the fact that rate splitting is optimal for product sources (as we shall prove in the sequel) implies the tensorization property of $s_Z^*(X;Y)$.

\subsection{Secret Key Generation from Product Sources}
We show that by appropriately splitting the communication rate to each `factor' in the product source, producing keys separately and combining them (called `rate splitting'), one can always achieve the optimal key rate.
\begin{theorem}\label{thm3}
In the problem of key generation from product sources satisfying (\ref{assump}), the maximum key rate satisfies
\begin{align}\label{eq18}
R(r)=\max_{\sum_{i=1}^n r_i\le r}\sum_{i=1}^n R_{i}(r_{i}),
\end{align}
which can be achieved by rate splitting. Here $R_{i}(r_i)$ is the key-rate--communication-rate function corresponding to the $i$'th source triple $(X_i,Y_i,Z_i)$. Further, if $R_{i}(\cdot)$ is differentiable and
\begin{align}
R(r)=\sum_{i=1}^n R_{i}(r^*_{i}),
\end{align}
then for each $i$, either $R'_{i}(r^*_{i})=\mu$ or $r^*_{i}=0$, where $\mu$ is some constant.
\end{theorem}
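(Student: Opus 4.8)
The plan is to prove (\ref{eq18}) through two opposite inequalities, relying on the single-letter description of the achievable region recalled in (\ref{region0}). Applied to the whole product source this gives
\[
R(r)=\sup\big[I(V;Y^n|U)-I(V;Z^n|U)\big],
\]
where the supremum is over $(U,V)-X^n-(Y^n,Z^n)$ subject to $I(U,V;X^n)-I(U,V;Y^n)\le r$, and $R_i(r_i)$ is the same supremum for the single triple $(X_i,Y_i,Z_i)$. Once (\ref{eq18}) is established, the ``water-filling'' condition is a separate, purely convex-analytic statement about the finite program $\max_{\sum_i r_i\le r,\,r_i\ge0}\sum_i R_i(r_i)$.

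For achievability, $R(r)\ge\max_{\sum_i r_i\le r}\sum_i R_i(r_i)$, I would fix a feasible allocation $(r_1,\dots,r_n)$ and, for each $i$, choose auxiliaries $(U_i,V_i)$ nearly attaining $R_i(r_i)$, with the tuples $(U_i,V_i,X_i,Y_i,Z_i)$ mutually independent across $i$. Taking $U=(U_1,\dots,U_n)$ and $V=(V_1,\dots,V_n)$, the product structure (\ref{assump}) makes the relevant informations additive,
\[
I(U,V;X^n)-I(U,V;Y^n)=\sum_{i}\big[I(U_i,V_i;X_i)-I(U_i,V_i;Y_i)\big]\le r,
\]
\[
I(V;Y^n|U)-I(V;Z^n|U)=\sum_{i}\big[I(V_i;Y_i|U_i)-I(V_i;Z_i|U_i)\big],
\]
so the point $(r,\sum_i R_i(r_i))$ is approached. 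This easy half simply verifies that concatenating independent per-component schemes adds communication and key rates.

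The converse $R(r)\le\max_{\sum_i r_i\le r}\sum_i R_i(r_i)$ is where the work lies, and it is precisely what Lemma \ref{lem1} delivers. Given any admissible $(U,V)$ I would invoke Lemma \ref{lem1} to obtain single-letter auxiliaries $(U_i,V_i)-X_i-(Y_i,Z_i)$ with $\sum_i[I(U_i,V_i;X_i)-I(U_i,V_i;Y_i)]\le I(U,V;X^n)-I(U,V;Y^n)$ and $I(V;Y^n|U)-I(V;Z^n|U)\le\sum_i[I(V_i;Y_i|U_i)-I(V_i;Z_i|U_i)]$. Setting $r_i:=I(U_i,V_i;X_i)-I(U_i,V_i;Y_i)$, which is nonnegative by data processing, yields $\sum_i r_i\le r$; since each $(U_i,V_i)$ is feasible for the $i$'th problem at rate $r_i$, the $i$'th term is at most $R_i(r_i)$. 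Chaining these gives $I(V;Y^n|U)-I(V;Z^n|U)\le\sum_i R_i(r_i)\le\max_{\sum_i r_i\le r}\sum_i R_i(r_i)$, and supremizing over $(U,V)$ closes the converse. I expect the genuinely hard, tensorization-heavy step to be the single-letterization inside Lemma \ref{lem1}; granting it, the converse of Theorem \ref{thm3} is bookkeeping.

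For the water-filling claim I would first note that each $R_i$ is concave and nondecreasing, inherited from the convexity of the rate region used already in Theorem \ref{thm1}, so the allocation program is concave and $\sum_i r_i=r$ is tight at the optimum whenever some $R_i$ is strictly increasing. A short exchange argument then finishes: if $r_i^*,r_j^*>0$ with $R_i'(r_i^*)\neq R_j'(r_j^*)$, shifting an infinitesimal amount of rate toward the coordinate with larger derivative strictly increases $\sum_i R_i(r_i)$, contradicting optimality; hence every coordinate with $r_i^*>0$ shares a common slope $\mu$, while the rest have $r_i^*=0$. The only delicate points are the boundary behaviour at $r_i^*=0$ and the degenerate all-flat case, both absorbed by concavity and by the stated disjunction ``either $R_i'(r^*_i)=\mu$ or $r^*_i=0$.''
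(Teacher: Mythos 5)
Your proposal is correct and follows essentially the same route as the paper: achievability by combining independent per-component auxiliaries (additivity of the single-letter region under the product structure), the converse by reducing an arbitrary $(U,V)$ to per-component auxiliaries via Lemma~\ref{lem1} and bookkeeping the rates $r_i:=I(U_i,V_i;X_i)-I(U_i,V_i;Y_i)$, and the common-slope condition from the KKT/exchange argument for the concave allocation program.
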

\begin{proof}
Each rate of $R_{i}(r^*_{i})$ can be approached by a scheme that operates on each of the $i$'th source triple separately. From the second equation in (\ref{assump}), the combination of these schemes form a legitimate scheme for the product source. Thus, the direction of $R(r)\ge\sum_{i=1}^n R_{i}(r_{i})$ is trivial.

By (\ref{region0}) the achievable region of $(r,R)$ is the union of
\begin{align}\label{region}
&[I(U,V;X_1^n)-I(U,V;Y_1^n),\infty)\nonumber\\
&\times[0,I(V;Y_1^n|U)-I(V;Z_1^n|U)]
\end{align}
over all $(U,V)$ such that $(U,V)-X_1^n-(Y_1^n,Z_1^n)$. The achievable region with rate splitting is the union of
\begin{align}
&\left[\sum_{i=1}^n I(U_i,V_i;X_i)-\sum_{i=1}^n I(U_i,V_i;Y_i),\infty\right)\nonumber\\
&\times\left[0,\sum_{i=1}^n I(V_i;Y_i|U_i)-\sum_{i=1}^nI(V_i;Z_i|U_i)\right]
\end{align}
over all $(U_i,V_i)$ such that $(U_i,V_i)-X_i-(Y_i,Z_i)$. The second union contains the first union, according to result of Lemma~\ref{lem1} in the appendix. Hence we also have $R(r)\le\sum_{i=1}^n R_{i}(r^*_{i})$ for some $r^*_{i}$, $i=1\dots n$. The last claim in the theorem for differentiable $R_{i}(\cdot)$ is from the KKT condition.
\end{proof}

As an application, we derive a `water-filling' solution for the communication-rate--key-rate tradeoff from two arbitrarily correlated vector Gaussian sources. For a nonnegative definite matrix $\bf\Sigma$, let ${\bf \Sigma}^{-1/2}$ be a positive definite matrix such that ${\bf \Sigma}^{-1/2}{\bf \Sigma\Sigma}^{-1/2}={\bf I}_r$, where ${\bf I}_{r}$ denotes the identity matrix of dimension $r={\rm rank}({\bf \Sigma})$. Also write ${\bf \Sigma}^{-1}=({\bf \Sigma}^{-1/2})^2$, which agrees with the definition of inverse matrix when $\bf \Sigma$ is invertible. Note that ${\bf \Sigma}^{-1/2}$ (and therefore ${\bf \Sigma}^{-1}$) may not be unique, although their choices do not affect the value of our key capacity expression. The following fact about Gaussian distributions is useful. The proof is based on singular value decomposition and is omitted here due to limitation of space.
\begin{lemma}\label{lem3}
Suppose $\mathbf{X},\mathbf{Y},\mathbf{Z}$ are jointly Gaussian vectors. There exist invertible linear transforms $\mathbf{X}\mapsto \bar{\mathbf{X}}$, $\mathbf{Y}\mapsto \bar{\mathbf{Y}}$, $\mathbf{Z}\mapsto \bar{\mathbf{Z}}$ such that all the five covariance matrices $\mathbf{\Sigma}_{\bar{\mathbf{X}}}$, $\mathbf{\Sigma}_{\bar{\mathbf{Y}}}$, $\mathbf{\Sigma}_{\bar{\mathbf{Z}}}$, $\mathbf{\Sigma}_{\bar{\mathbf{X}}\bar{\mathbf{Y}}}$, $\mathbf{\Sigma}_{\bar{\mathbf{X}}\bar{\mathbf{Z}}}$ are diagonalized if and only if $\mathbf{\Sigma}^{-1/2}_{\mathbf{X}}\mathbf{\Sigma}_{\mathbf{X}\mathbf{Z}}\mathbf{\Sigma}^{-1}_{\mathbf{Z}}\mathbf{\Sigma}_{\mathbf{Z}\mathbf{X}}\mathbf{\Sigma}^{-1/2}_{\mathbf{X}}$ commutes with $\mathbf{\Sigma}^{-1/2}_{\mathbf{X}}\mathbf{\Sigma}_{\mathbf{X}\mathbf{Y}}\mathbf{\Sigma}^{-1}_{\mathbf{Y}}\mathbf{\Sigma}_{\mathbf{Y}\mathbf{X}}\mathbf{\Sigma}^{-1/2}_{\mathbf{X}}$.
\end{lemma}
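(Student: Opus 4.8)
The plan is to strip away the auto-covariance constraints by whitening, reduce the simultaneous diagonalization of the two cross-covariances to a statement about a single pair of symmetric matrices, and then invoke the classical criterion that two real symmetric matrices admit a common orthogonal diagonalizer exactly when they commute.

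First I would parametrize the admissible transforms. Writing $\bar{\mathbf{X}}=A\mathbf{X}$, $\bar{\mathbf{Y}}=B\mathbf{Y}$, $\bar{\mathbf{Z}}=C\mathbf{Z}$, the requirement that $\mathbf{\Sigma}_{\bar{\mathbf{X}}}=A\mathbf{\Sigma}_{\mathbf{X}}A^{T}=:D_{X}$ be diagonal forces $A=D_{X}^{1/2}O\,\mathbf{\Sigma}_{\mathbf{X}}^{-1/2}$ for some orthogonal $O$, since $O:=D_{X}^{-1/2}A\,\mathbf{\Sigma}_{\mathbf{X}}^{1/2}$ then satisfies $OO^{T}=I$; likewise $B=D_{Y}^{1/2}P\,\mathbf{\Sigma}_{\mathbf{Y}}^{-1/2}$ and $C=D_{Z}^{1/2}Q\,\mathbf{\Sigma}_{\mathbf{Z}}^{-1/2}$ with $P,Q$ orthogonal. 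Setting $M_{Y}:=\mathbf{\Sigma}_{\mathbf{X}}^{-1/2}\mathbf{\Sigma}_{\mathbf{XY}}\mathbf{\Sigma}_{\mathbf{Y}}^{-1/2}$ and $M_{Z}:=\mathbf{\Sigma}_{\mathbf{X}}^{-1/2}\mathbf{\Sigma}_{\mathbf{XZ}}\mathbf{\Sigma}_{\mathbf{Z}}^{-1/2}$, the cross-covariances become $\mathbf{\Sigma}_{\bar{\mathbf{X}}\bar{\mathbf{Y}}}=D_{X}^{1/2}(OM_{Y}P^{T})D_{Y}^{1/2}$ and $\mathbf{\Sigma}_{\bar{\mathbf{X}}\bar{\mathbf{Z}}}=D_{X}^{1/2}(OM_{Z}Q^{T})D_{Z}^{1/2}$. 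Because the $D$'s are diagonal, the five matrices are simultaneously diagonalized iff the orthogonal matrices $O,P,Q$ can be chosen so that $OM_{Y}P^{T}$ and $OM_{Z}Q^{T}$ are diagonal.

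Second, I would convert this into a condition on the $\mathbf{X}$-side matrix $O$ alone using the SVD. If $OM_{Y}P^{T}$ is diagonal then $O(M_{Y}M_{Y}^{T})O^{T}=(OM_{Y}P^{T})(OM_{Y}P^{T})^{T}$ is diagonal, so $O$ diagonalizes the symmetric matrix $M_{Y}M_{Y}^{T}$; conversely, whenever $O$ diagonalizes $M_{Y}M_{Y}^{T}$ the rows of $OM_{Y}$ are mutually orthogonal, and completing the normalized nonzero rows to an orthonormal basis yields an orthogonal $P$ with $OM_{Y}P^{T}$ diagonal. The identical statement holds for $M_{Z}$ and $Q$. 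Thus diagonalization is achievable iff a single orthogonal $O$ simultaneously diagonalizes $M_{Y}M_{Y}^{T}$ and $M_{Z}M_{Z}^{T}$. Finally I would note that $M_{Y}M_{Y}^{T}=\mathbf{\Sigma}_{\mathbf{X}}^{-1/2}\mathbf{\Sigma}_{\mathbf{XY}}\mathbf{\Sigma}_{\mathbf{Y}}^{-1}\mathbf{\Sigma}_{\mathbf{YX}}\mathbf{\Sigma}_{\mathbf{X}}^{-1/2}$ and $M_{Z}M_{Z}^{T}=\mathbf{\Sigma}_{\mathbf{X}}^{-1/2}\mathbf{\Sigma}_{\mathbf{XZ}}\mathbf{\Sigma}_{\mathbf{Z}}^{-1}\mathbf{\Sigma}_{\mathbf{ZX}}\mathbf{\Sigma}_{\mathbf{X}}^{-1/2}$ are precisely the two matrices in the statement, and that two real symmetric matrices share an orthogonal diagonalizer iff they commute, which delivers the asserted equivalence.

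The main obstacle is the SVD step: recovering the $\mathbf{Y}$- and $\mathbf{Z}$-side rotations $P,Q$ from the $\mathbf{X}$-side diagonalizer $O$ needs care when $M_{Y}$ or $M_{Z}$ has repeated or vanishing singular values, since one must upgrade a merely diagonal Gram matrix $OM_{Y}M_{Y}^{T}O^{T}$ to a genuinely diagonal $OM_{Y}P^{T}$; this is exactly where the singular value decomposition does the work. A secondary technical point is the degenerate case in which some $\mathbf{\Sigma}_{\mathbf{X}},\mathbf{\Sigma}_{\mathbf{Y}},\mathbf{\Sigma}_{\mathbf{Z}}$ is singular, where $\mathbf{\Sigma}^{-1/2}$ is read on the range per the paper's convention and the argument is run after restricting each vector to its support, leaving the conclusion intact.
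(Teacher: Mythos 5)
Your proof is correct: the reduction of the five diagonalization conditions to finding a single orthogonal $O$ that simultaneously diagonalizes $M_Y M_Y^T$ and $M_Z M_Z^T$ (with $P,Q$ recovered from the orthogonality of the rows of $OM_Y$ and $OM_Z$), followed by the commuting-symmetric-matrices criterion, is sound, and it matches the paper's own route, since the paper omits the proof but states it is based on the singular value decomposition, which is exactly the mechanism you use. Your handling of the degenerate cases (repeated/zero singular values and singular covariances restricted to their supports) addresses the only delicate points.
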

\begin{remark}\label{rem2}
Luckily, the commutativity assumption is satisfied by stationary Gaussian processes (asymptotically over large blocklengths). This is due to the commutativity of convolution.
\end{remark}

\begin{corollary}\label{cor2}
If $\mathbf{X},\mathbf{Y}$ are jointly Gaussian vectors, then there exist invertible linear transforms $\mathbf{X}\mapsto \bar{\mathbf{X}}$, $\mathbf{Y}\mapsto \bar{\mathbf{Y}}$ such that $\mathbf{\Sigma}_{\bar{\mathbf{X}}}$, $\mathbf{\Sigma}_{\bar{\mathbf{Y}}}$, $\mathbf{\Sigma}_{\bar{\mathbf{X}}\bar{\mathbf{Y}}}$ are diagonalized.
\end{corollary}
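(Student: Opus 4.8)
The plan is to obtain Corollary \ref{cor2} as the eavesdropper-free special case of Lemma \ref{lem3}. First I would adjoin to the pair $(\mathbf{X},\mathbf{Y})$ an auxiliary Gaussian vector $\mathbf{Z}$ drawn independently of $(\mathbf{X},\mathbf{Y})$, say with $\mathbf{\Sigma}_{\mathbf{Z}}=\mathbf{I}$. Independence forces $\mathbf{\Sigma}_{\mathbf{X}\mathbf{Z}}=\mathbf{0}$, so the matrix $\mathbf{\Sigma}^{-1/2}_{\mathbf{X}}\mathbf{\Sigma}_{\mathbf{X}\mathbf{Z}}\mathbf{\Sigma}^{-1}_{\mathbf{Z}}\mathbf{\Sigma}_{\mathbf{Z}\mathbf{X}}\mathbf{\Sigma}^{-1/2}_{\mathbf{X}}$ appearing in Lemma \ref{lem3} is the zero matrix, which commutes with every matrix and in particular with $\mathbf{\Sigma}^{-1/2}_{\mathbf{X}}\mathbf{\Sigma}_{\mathbf{X}\mathbf{Y}}\mathbf{\Sigma}^{-1}_{\mathbf{Y}}\mathbf{\Sigma}_{\mathbf{Y}\mathbf{X}}\mathbf{\Sigma}^{-1/2}_{\mathbf{X}}$. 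Hence the commutativity hypothesis of Lemma \ref{lem3} is satisfied automatically, and the lemma supplies invertible transforms simultaneously diagonalizing all five covariance matrices. Discarding the transform on $\mathbf{Z}$ and keeping only those on $\mathbf{X}$ and $\mathbf{Y}$ then diagonalizes $\mathbf{\Sigma}_{\bar{\mathbf{X}}},\mathbf{\Sigma}_{\bar{\mathbf{Y}}},\mathbf{\Sigma}_{\bar{\mathbf{X}}\bar{\mathbf{Y}}}$, which is exactly the claim.

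For a self-contained argument that does not invoke the full lemma, I would proceed by whitening followed by a singular value decomposition. Setting $\bar{\mathbf{X}}_0=\mathbf{\Sigma}^{-1/2}_{\mathbf{X}}\mathbf{X}$ and $\bar{\mathbf{Y}}_0=\mathbf{\Sigma}^{-1/2}_{\mathbf{Y}}\mathbf{Y}$ makes $\mathbf{\Sigma}_{\bar{\mathbf{X}}_0}$ and $\mathbf{\Sigma}_{\bar{\mathbf{Y}}_0}$ equal to identity matrices, since $\mathbf{\Sigma}^{-1/2}$ is symmetric and $\mathbf{\Sigma}^{-1/2}\mathbf{\Sigma}\,\mathbf{\Sigma}^{-1/2}=\mathbf{I}_r$, while the cross-covariance becomes $\mathbf{M}:=\mathbf{\Sigma}^{-1/2}_{\mathbf{X}}\mathbf{\Sigma}_{\mathbf{X}\mathbf{Y}}\mathbf{\Sigma}^{-1/2}_{\mathbf{Y}}$. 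Writing the singular value decomposition $\mathbf{M}=\mathbf{U}\mathbf{D}\mathbf{V}^{\top}$ with $\mathbf{U},\mathbf{V}$ orthogonal and $\mathbf{D}$ diagonal, I would apply the further orthogonal maps $\bar{\mathbf{X}}=\mathbf{U}^{\top}\bar{\mathbf{X}}_0$ and $\bar{\mathbf{Y}}=\mathbf{V}^{\top}\bar{\mathbf{Y}}_0$. Orthogonality preserves the identity covariances, $\mathbf{\Sigma}_{\bar{\mathbf{X}}}=\mathbf{U}^{\top}\mathbf{U}=\mathbf{I}$ and likewise for $\bar{\mathbf{Y}}$, whereas the cross-covariance collapses to $\mathbf{\Sigma}_{\bar{\mathbf{X}}\bar{\mathbf{Y}}}=\mathbf{U}^{\top}\mathbf{M}\mathbf{V}=\mathbf{D}$, which is diagonal. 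The overall transforms $\mathbf{U}^{\top}\mathbf{\Sigma}^{-1/2}_{\mathbf{X}}$ and $\mathbf{V}^{\top}\mathbf{\Sigma}^{-1/2}_{\mathbf{Y}}$ are compositions of invertible maps, hence invertible.

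The main technical obstacle I anticipate is rank deficiency of the covariance matrices: when $\mathbf{\Sigma}_{\mathbf{X}}$ or $\mathbf{\Sigma}_{\mathbf{Y}}$ is singular the symbol $\mathbf{\Sigma}^{-1/2}$ must be read in the generalized sense fixed before Lemma \ref{lem3}, and \emph{invertible linear transform} should be understood on the support of the (degenerate) Gaussian. I would handle this by restricting each vector to the range of its own covariance before whitening, carrying out the SVD step on those subspaces, and then checking that the composed maps remain bijective there; the directions outside the range carry no randomness and are immaterial to the diagonalization. With this bookkeeping the two routes above coincide and the corollary follows.
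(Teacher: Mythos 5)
Your proposal is correct and matches the paper's (implicit) route: Corollary~\ref{cor2} is indeed meant as the special case of Lemma~\ref{lem3} in which the commutativity hypothesis holds trivially, and your self-contained whitening-plus-SVD argument is exactly the singular-value-decomposition proof the paper says it omits for Lemma~\ref{lem3}. Your extra care with rank-deficient covariances is a reasonable reading of the paper's convention for $\mathbf{\Sigma}^{-1/2}$ and introduces no gap.
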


Thanks to Theorem \ref{thm3} and Corollary \ref{cor2}, the task of finding the key capacity of arbitrarily correlated Gaussian vector sources in the absence of eavesdroppers is reduced to the case of product Gaussian sources $(X^n,Y^n)$ satisfying (\ref{assump}). In the presence of an eavesdropper, it is not always possible to reduce the problem to the case of product sources, since the conditions in Theorem \ref{thm3} are not always fulfilled; but we discuss its practical relevance later. We now present the key capacity of product Gaussian sources. The solution displays a `water-filling' behaviour which is reminiscent of the rate-distortion function for Gaussian vectors~\cite{cover2012elements}.
\begin{theorem}\label{thm4}
If $(X^n,Y^n,Z^n)$ are product Gaussian sources, then the achievable communication and key rates are parameterized by $\mu>0$ as
\begin{align}\label{rp}
r=\frac{1}{2}\sum_{i:\beta_i>\mu}\log\frac{\beta_i(\mu+1)}{(\beta_i+1)\mu},
\end{align}
\begin{align}\label{rk}
R=\frac{1}{2}\sum_{i:\beta_i>\mu}\log\frac{\beta_i+1}{\mu+1},
\end{align}
where
$
\beta_i:=\frac{\rho_{X_iY_i}^2-\rho_{X_iZ_i}^2}{1-\rho_{X_iY_i}^2}.
$
\end{theorem}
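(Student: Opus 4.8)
The plan is to reduce to the scalar problem via rate splitting, compute the scalar key-rate function in closed form, and then solve the resulting concave allocation problem by water-filling. Since the sources already satisfy (\ref{assump}), Theorem~\ref{thm3} applies directly: the tradeoff is governed by $R(r)=\max_{\sum_i r_i\le r}\sum_i R_i(r_i)$, so it suffices to determine the scalar key-rate function $R_i(r_i)$ for a single Gaussian triple $(X_i,Y_i,Z_i)$ and then optimize the allocation. I would first normalize each source to unit variance (this changes neither the correlation coefficients nor any mutual information), and then invoke Watanabe's result \cite{watanabe} that jointly Gaussian auxiliaries are optimal for Gaussian sources, parametrizing the region (\ref{region0}) by Gaussian test channels $(U_i,V_i)$ with $(U_i,V_i)-X_i-(Y_i,Z_i)$.

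A simplification I would exploit is that every information quantity in (\ref{region0})---namely $I(U_i,V_i;X_i)-I(U_i,V_i;Y_i)$ and $I(V_i;Y_i|U_i)-I(V_i;Z_i|U_i)$---involves $Y_i$ and $Z_i$ only separately, never jointly. Hence, under the Markov constraint, $R_i$ depends on the triple only through $\rho_{X_iY_i}$ and $\rho_{X_iZ_i}$, and I may fix the immaterial value of $\rho_{Y_iZ_i}$ so that the triple is degraded, say $X_i-Y_i-Z_i$ whenever $\rho_{X_iY_i}^2\ge\rho_{X_iZ_i}^2$. Carrying out the Gaussian optimization---taking $U_i$ trivial and $V_i=X_i+\text{noise}$, and optimizing the noise variance for a fixed communication rate---I expect to obtain the closed form
\begin{align}
R_i(r_i)=\tfrac12\log\bigl(1+\beta_i(1-2^{-2r_i})\bigr),\qquad \beta_i=\frac{\rho_{X_iY_i}^2-\rho_{X_iZ_i}^2}{1-\rho_{X_iY_i}^2},
\end{align}
with components having $\beta_i\le 0$ contributing nothing. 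This scalar optimization is the main obstacle: two auxiliaries are in principle present, and one must verify that the jointly Gaussian choice together with a trivial $U_i$ is genuinely optimal rather than merely feasible, which is where the degraded reduction and the characterization in (\ref{eq19}) should do the work.

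Granting the scalar formula, the remainder is routine. Each $R_i$ is concave and increasing with
\begin{align}
R_i'(r_i)=\frac{\beta_i\,2^{-2r_i}}{1+\beta_i(1-2^{-2r_i})},\qquad R_i'(0)=\beta_i,
\end{align}
so the maximal slope of the $i$'th curve is exactly $\beta_i$. The allocation $\max_{\sum_i r_i\le r}\sum_i R_i(r_i)$ is a concave program, and the KKT characterization in Theorem~\ref{thm3} equalizes slopes: there is a multiplier $\mu>0$ with $R_i'(r_i^*)=\mu$ whenever $r_i^*>0$, while $r_i^*=0$ precisely when the maximal slope cannot reach $\mu$, i.e. when $\beta_i\le\mu$. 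Solving $R_i'(r_i)=\mu$ on the active set $\{i:\beta_i>\mu\}$ gives $2^{2r_i}=\frac{\beta_i(\mu+1)}{(\beta_i+1)\mu}$ and correspondingly $2^{2R_i}=\frac{\beta_i+1}{\mu+1}$; summing over the active set yields exactly (\ref{rp}) and (\ref{rk}). Finally I would note that sweeping $\mu$ from $\max_i\beta_i$ down to $0$ sweeps $r$ from $0$ to $\infty$, so that $\mu$ traces out the entire communication-rate--key-rate curve.
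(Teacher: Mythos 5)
Your proposal is correct and follows essentially the same route as the paper: reduce to the scalar allocation problem via Theorem~\ref{thm3}, note that the region (\ref{region0}) depends only on the marginals $P_{X_iY_i}$ and $P_{X_iZ_i}$ so the triple may be taken stochastically degraded, use the explicit scalar Gaussian key-rate formula of \cite{watanabe} to get $R_i(r_i)=\tfrac12\log(1+\beta_i^+-\beta_i^+2^{-2r_i})$, and then water-fill via the KKT condition. The one step you flag as uncertain---optimality of Gaussian auxiliaries with trivial $U_i$ in the scalar degraded case---is exactly what the paper also outsources to \cite{watanabe} rather than reproving.
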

\begin{remark}
The initial efficiency is $\max_{1\le i\le n}\beta_i^+$, where $\beta_i^+:=\max\{\beta_i,0\}$.
\end{remark}
\begin{proof}
Reference \cite{watanabe} derived an explicit formula for the achievable key rate in the case of degraded scalar sources:
\begin{align}
R(r)=& \frac{1}{2}\log\frac{\Sigma_{y|xz}2^{-2r}
+\Sigma_{y|z}(1-2^{-2r})}{\Sigma_{y|xz}}\label{scalar}\\
=& \frac{1}{2}
\log\left(\frac{1-\rho_{xz}^2}
{1-\rho_{xy}^2}+\frac{\rho_{xy}^2-\rho_{xz}^2}
{1-\rho_{xy}^2}2^{-2r}\right).\label{scalar1}
\end{align}
Note that the achievable rate region depends only on the marginal distributions $P_{XY}$ and $P_{XZ}$ (see (\ref{region0})), hence (\ref{scalar1}) holds as long as $\rho_{xy}\ge\rho_{xz}$, in which case $X,Y,Z$ is stochastically degraded \cite{cover2012elements} in that order. When $\rho_{xy}<\rho_{xz}$, from (\ref{region}) we see that $R(r)=0$ since $X,Z,Y$ is stochastically degraded in that order. Hence in all cases, we can further simplify
\begin{align}\label{scalar2}
R(r)= \frac{1}{2}
\log\left(1+\beta^+-\beta^+ 2^{-2r}\right)
\end{align}
where $\beta:=\frac{\rho_{xy}^2-\rho_{xz}^2}{1-\rho_{xy}^2}$ and recall the notation $\beta^+:=\max\{\beta,0\}$.

Now consider the product sources, and suppose $r^*_{i}$ are the numbers that achieve the maximum in Theorem \ref{thm3}.
According to Theorem \ref{thm3}, either $R_{i}'(r^*_{i})=\frac{\beta^+_i 2^{-2r^*_{i}}}{1+\beta^+_i-\beta^+_i 2^{-2r^*_{i}}}=\mu$ or $r^*_{i}=0$ for each $i$, where $\mu$ is some constant. For fixed $\mu$, this means
\begin{align}
r^*_{i}=\max\left\{0,\frac{1}{2}\log
\frac{(1+\mu)\beta^+_i}{\mu(1+\beta^+_i)}\right\}.
\end{align}
Equivalently, we can write
\begin{align}\label{rp}
r^*_{i}=\frac{1}{2}\log\frac{\beta^+_i(m_i+1)}{(\beta^+_i+1)m_i},
\end{align}
where $m_i:=\min\{\mu,\beta^+_i\}$. The claim then follows by substituting the value of $r^*_{i}$ into (\ref{scalar2}) and applying (\ref{eq18}).
\end{proof}

Theorem \ref{thm4} can be used to derive the key-rate--communication-rate tradeoff for Wiener class\footnote{We say a Gaussian process is in the Wiener class if its correlation function is absolutely summable \cite{gray2006toeplitz}. This is merely a technical condition which facilitates the proof.} stationary Gaussian processes, in which case the linear transforms in Lemma~\ref{lem3} can be easily found. We shall only discuss the basic idea here since a rigorous derivation is complicated and involves the asymptotic behaviour of Toeplitz matrices. Consider first passing $Y$ through a filter whose impulse response is $R_{XY}$,
\footnote{When $R_{XY}$ is bandlimited, convolution with $R_{XY}$ becomes a degenerate linear transform. In this case we can use a signal $\hat{R}_{XY}$ as an alternative, where $\hat{R}_{XY}$ has full spectrum and agrees with $R_{XY}$ in the pass-band of $R_{XY}$. The final formula of key capacity however will remain unchanged.}
the correlation function between $X$ and $Y$, resulting in a new process $\hat{Y}$. Similarly, construct $\hat{Z}$ by convolving with $R_{XZ}$. Note that
\begin{align}
R_{X\hat{Y}}&=R_{XY}*R_{YX},\\
R_{X\hat{Z}}&=R_{XZ}*R_{ZX},
\end{align}
become symmetric functions. Setting $\bar{\mathbf{X}}={\bf Q}X^n$, $\bar{\mathbf{Y}}={\bf Q}\hat{Y}^n$, $\bar{\mathbf{Z}}={\bf Q}\hat{Z}^n$ where $\bf Q$ the sine/cosine orthogonal matrix, the covariance matrices $\bf\Sigma_{\bar{X}}$, $\bf\Sigma_{\bar{Y}}$, $\bf\Sigma_{\bar{Z}}$, $\bf\Sigma_{\bar{X}\bar{Y}}$, $\bf\Sigma_{\bar{X}\bar{Z}}$ will be approximately diagonal (asymptotically for large $n$).

In summary, in the spectral representation the original Gaussian sources are converted sources satisfying the product assumption (\ref{assump}), and the correlation coefficients corresponding to frequency $\omega$ are
\begin{align}
\rho_{X'Y'}(\omega)=&\frac{|S_{XY}(\omega)|}{\sqrt{S_{X}(\omega)S_{Y}(\omega)}},\label{rho}\\
\rho_{X'Z'}(\omega)=&\frac{|S_{XZ}(\omega)|}{\sqrt{S_{X}(\omega)S_{Z}(\omega)}},\label{rho1}
\end{align}
where $S_{X},S_{Y},S_{Z},S_{XY},S_{XZ}$ denote the spectral densities and joint spectral densities. Using (\ref{rho}), (\ref{rho1}) and Theorem~\ref{thm4},
the key capacity of Gaussian processes is obtained as follows:
\begin{theorem}
For Wiener class stationary Gaussian processes, we have
\begin{align}
r=&\frac{1}{4\pi}\int_{\beta(\omega)>\mu}\log\frac{\beta(\omega)(\mu+1)}{(\beta(\omega)
+1)\mu}{\rm d}\omega,\\
R=&\frac{1}{4\pi}\int_{\beta(\omega)>\mu}
\log\frac{\beta(\omega)+1}{\mu+1}{\rm d}\omega.
\end{align}
where
$\beta(\omega)=\frac{|S_{XY}(\omega)|^2S_Z(\omega)-|S_{XZ}
(\omega)|^2S_Y(\omega)}{S_X(\omega)S_Y(\omega)S_Z(\omega)-
|S_{XY}(\omega)|^2S_Z(\omega)}$.
\end{theorem}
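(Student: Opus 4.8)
\emph{Proof proposal.} The plan is to obtain this formula as the large-blocklength limit of Theorem~\ref{thm4}, converting the finite sums over spectral components into integrals by way of the asymptotic spectral theory of Toeplitz matrices (Szeg\H{o}'s theorem). First I would fix a blocklength $n$ and regard the length-$n$ segment $(X_1^n,Y_1^n,Z_1^n)$ of the process as a single jointly Gaussian vector source. Because the process is stationary, the covariance and cross-covariance matrices $\mathbf{\Sigma}_X,\mathbf{\Sigma}_Y,\mathbf{\Sigma}_Z,\mathbf{\Sigma}_{XY},\mathbf{\Sigma}_{XZ}$ are all Toeplitz, with symbols given by the (joint) spectral densities $S_X,S_Y,S_Z,S_{XY},S_{XZ}$. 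The preliminary filtering step described before the statement---convolving $Y$ with $R_{XY}$ and $Z$ with $R_{XZ}$ to produce $\hat Y,\hat Z$---serves to symmetrize the cross-correlations, so that the relevant cross-covariances become \emph{symmetric} Toeplitz matrices and all five matrices share, asymptotically, the sinusoidal eigenvectors of the sine/cosine orthogonal matrix $\mathbf{Q}$.

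Next I would verify that the hypothesis of Lemma~\ref{lem3} holds in this limit. By Gray's theory each of the matrices is asymptotically equivalent to a circulant, and products and inverses of asymptotically circulant matrices remain asymptotically circulant; since all circulants are simultaneously diagonalized by the Fourier basis they commute in the limit. This is exactly the content of Remark~\ref{rem2}. Consequently the $\mathbf{Q}$-transformed source satisfies the product assumption (\ref{assump}) up to a vanishing error, with the $i$-th factor carrying correlation coefficients (\ref{rho}), (\ref{rho1}) evaluated at the discrete frequency $\omega_i$, whence $\beta_i\to\beta(\omega_i)$. I would then invoke Theorem~\ref{thm4} for this (approximate) product source, which parameterizes $(r,R)$ by $\mu$ through $\tfrac12\sum_{i:\beta_i>\mu}\log\frac{\beta_i(\mu+1)}{(\beta_i+1)\mu}$ and $\tfrac12\sum_{i:\beta_i>\mu}\log\frac{\beta_i+1}{\mu+1}$. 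Normalizing by $n$ to obtain per-symbol rates and letting $n\to\infty$, the Wiener-class hypothesis (absolute summability of the correlation functions) guarantees the symbols are continuous, so Szeg\H{o}'s eigenvalue-distribution theorem applies: for continuous $g$, $\tfrac1n\sum_i g(\beta_i)\to\tfrac{1}{2\pi}\int_{-\pi}^{\pi}g(\beta(\omega))\,{\rm d}\omega$. Applying this with $g$ taken to be each summand (restricted to $\beta>\mu$) turns $\tfrac{1}{2n}\sum_i$ into $\tfrac{1}{4\pi}\int$, producing exactly the claimed integral formulas.

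The main obstacle will be making the \emph{asymptotic} diagonalization rigorous: $\mathbf{Q}$ only diagonalizes a Toeplitz matrix approximately, leaving off-diagonal residuals that vanish in an appropriate (normalized Frobenius / weak) norm but are nonzero at each finite $n$. I would therefore need a robustness argument showing that a small perturbation of the five covariance matrices produces a correspondingly small perturbation of the rate pair computed from Theorem~\ref{thm4}; the Wiener-class condition supplies the uniform summability needed both to bound these residuals and to justify passing the limit inside the integral. A secondary technical point is that $\beta(\omega)$ may be discontinuous where a spectral density vanishes, so I would treat the boundary set $\{\beta(\omega)=\mu\}$---of measure zero for almost every $\mu$---separately in order to interchange the limit with the indicator $1_{\beta(\omega)>\mu}$.
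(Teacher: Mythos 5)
Your proposal follows essentially the same route the paper sketches: pre-filter $Y$ and $Z$ by $R_{XY}$ and $R_{XZ}$ to symmetrize the cross-correlations, apply the sine/cosine transform $\mathbf{Q}$ to obtain an asymptotically product Gaussian source with correlation coefficients (\ref{rho})--(\ref{rho1}), invoke Theorem~\ref{thm4}, and pass from sums to integrals via the asymptotic (Szeg\H{o}/Toeplitz) eigenvalue distribution. The paper itself omits the rigorous details, and the technical obstacles you flag (controlling the off-diagonal residuals and the boundary set $\{\beta(\omega)=\mu\}$) are precisely the ones the authors allude to, so your outline is consistent with their intended argument.
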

Details of the proof are omitted here.
\begin{remark}
Initial efficiency is the essential supremum of $\beta^+$.
\end{remark}

\section{Discussion}
The evidence points to the principle that rate splitting is optimal for product resources asymptotically in most coding problems admitting single-letter information theoretic solutions. Indeed, the algebraic manipulations in the converse proofs usually rely only on the independence of $\{X_t\}$, rather than that they are identically distributed. Hence the main element in proving such a result about rate splitting (e.g. Lemma \ref{lem1} in the appendix) is usually related to the converse proof of the corresponding coding theorem. However, this is not the case for coding problems of combinatorial nature. For example, the zero-error capacity of independent channels operating in parallel is not the sum of the zero-error capacities of the individual channels \cite{alon1998shannon}.


\section*{Acknowledgment}
This work was supported by NSF under Grants CCF-1116013, CCF-1319299, CCF-1319304, and the Air Force Office of Scientific Research under Grant FA9550-12-1-0196.

\appendix
\begin{lemma}\label{lem1}
Suppose that $\{(X_i,Y_i,Z_i)\}_{i=1}^n$ possess the product structure of (\ref{assump}), and $(U,V)$ are r.v.'s such that $(U,V)-X_1^n-(Y_1^n,Z_1^n)$. Then there exist $U_1^n$ and $V_1^n$ such that $(U_i,V_i)-X_i-(Y_i,Z_i)$ for $i=1,\dots,n$ and
\begin{align}
I(U,V;X_1^n)-I(U,V;Y_1^n)\ge&\sum_{i=1}^n [I(U_i,V_i;X_i)-I(U_i,V_i;Y_i)],\label{e43}\\
I(V;Y_1^n|U)-I(V;Z_1^n|U)=&\sum_{i=1}^n [I(V_i;Y_i|U_i)-I(V_i;Z_i|U_i)].\label{e1}
\end{align}
\end{lemma}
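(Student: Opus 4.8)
The plan is to exhibit the auxiliaries explicitly and to check the three requirements in turn. Motivated by the usual single‑letterization of broadcast‑type converses, I would write $W:=(U,V)$ and, for each coordinate $i$, set
\[
U_i:=(U,\,Y_1^{i-1},\,Z_{i+1}^n),\qquad V_i:=V ,
\]
so that $(U_i,V_i)=(W,Y_1^{i-1},Z_{i+1}^n)$. Before anything else I would record a harmless reduction: every mutual‑information term in (\ref{e43}) and (\ref{e1}) is a functional of $P_{UVX_1^nY_1^n}$ or of $P_{UVX_1^nZ_1^n}$ alone, and by the Markov hypothesis these are determined by $P_{UV|X_1^n}$ together with the factorizing marginals $P_{X_1^nY_1^n},P_{X_1^nZ_1^n}$. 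Hence one may assume without loss of generality that the triples $(X_i,Y_i,Z_i)$ are mutually independent across $i$. Under that coupling the required Markov chain $(U_i,V_i)-X_i-(Y_i,Z_i)$ is immediate, since conditioned on $X_i$ the pair $(Y_i,Z_i)$ is produced by local noise and is therefore independent of $U,V,Y_1^{i-1},Z_{i+1}^n$.

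For the secrecy identity (\ref{e1}) I would use a hybrid telescoping that is a pure chain‑rule manipulation and needs no independence. Introduce $\Phi_i:=I(V;Y_1^{i},Z_{i+1}^n\mid U)$ for $i=0,\dots,n$, so that $\Phi_0=I(V;Z_1^n\mid U)$ and $\Phi_n=I(V;Y_1^n\mid U)$. Writing the common conditioning as $C_i:=(Y_1^{i-1},Z_{i+1}^n)$ and splitting off coordinate $i$, each increment collapses to $\Phi_i-\Phi_{i-1}=I(V;Y_i\mid U,C_i)-I(V;Z_i\mid U,C_i)=I(V_i;Y_i\mid U_i)-I(V_i;Z_i\mid U_i)$. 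Summing telescopes to $I(V;Y_1^n\mid U)-I(V;Z_1^n\mid U)$, giving (\ref{e1}) as an exact equality.

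The rate inequality (\ref{e43}) is where the real work lies. I would first single‑letterize the left side by a second telescoping, $\Theta_i:=I(W;X_i^n,Y_1^{i-1})$, which yields
\[
I(W;X_1^n)-I(W;Y_1^n)=\sum_i\big[\,I(W;X_i\mid Y_1^{i-1},X_{i+1}^n)-I(W;Y_i\mid Y_1^{i-1},X_{i+1}^n)\,\big].
\]
On the other hand, using independence across $i$ to drop $I(Y_1^{i-1},Z_{i+1}^n;X_i)=I(Y_1^{i-1},Z_{i+1}^n;Y_i)=0$, the target right side of (\ref{e43}) rewrites as $\sum_i[\,I(W;X_i\mid Y_1^{i-1},Z_{i+1}^n)-I(W;Y_i\mid Y_1^{i-1},Z_{i+1}^n)\,]$. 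The two expressions differ only in the future conditioning, the genuine $X_{i+1}^n$ versus its degraded copy $Z_{i+1}^n$. A short chain‑rule computation (using $X_i,Y_i\perp(Y_1^{i-1},X_{i+1}^n)$) reduces the per‑coordinate gap to
\[
I(X_{i+1}^n;X_i\mid W,Y_1^{i-1},Z_{i+1}^n)-I(X_{i+1}^n;Y_i\mid W,Y_1^{i-1},Z_{i+1}^n),
\]
which is nonnegative by the conditional data‑processing inequality applied to the Markov chain $X_{i+1}^n-X_i-Y_i$; this chain survives conditioning on $(W,Y_1^{i-1},Z_{i+1}^n)$ because, given $X_i$, the letter $Y_i$ is independent of everything else. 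Summing over $i$ establishes (\ref{e43}).

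I expect the delicate point to be precisely this reconciliation of the two single‑letterizations: the left‑side telescoping naturally produces future conditioning on $X_{i+1}^n$, whereas the auxiliaries forced by (\ref{e1}) carry the degraded $Z_{i+1}^n$, so the conditioning sets do not match and must be realigned. Keeping the bookkeeping of the conditioning sets straight, and verifying that the conditional chain $X_{i+1}^n-X_i-Y_i$ genuinely persists under the extra conditioning, is the step most prone to error; I would first confirm the whole argument on $n=2$ before trusting the general telescoping. Notably, this route uses only telescoping and data processing and avoids the Cs\'{i}sz\'{a}r sum identity entirely.
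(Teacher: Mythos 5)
Your proof is correct and follows essentially the same route as the paper: the same auxiliaries $U_i=(U,Y^{i-1},Z_{i+1}^n)$, $V_i=V$, the same two telescoping sums (your telescopings are exactly the Csisz\'ar sum identity the paper invokes via \cite[Lemma 4.1]{ahlswede1993common}, so you have not really avoided it), and the same key step of inserting $Z_{i+1}^n$ and then discarding $X_{i+1}^n$ at the cost of a residual $I(X_{i+1}^n;X_i|\cdot)-I(X_{i+1}^n;Y_i|\cdot)$ that is nonnegative by conditional data processing. The only organizational difference is that you re-couple the whole source to $\prod_i P_{X_iY_iZ_i}$ up front --- legitimate, since the hypothesis and every quantity in the statement depend only on $P_{UV|X^n}$, $P_{X^nY^n}$ and $P_{X^nZ^n}$ --- whereas the paper instead introduces a conditionally independent copy $\bar{Z}^n$ of $Z^n$ given $X^n$ and restores the true joint at the end by redefining $U_i$ through $P_{U_i|V_iX_iY_iZ_i}=P_{\bar{U}_i|V_iX_i}$.
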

\begin{proof}
%
Choose a random vector $\bar{Z}^n$ such that
\begin{align}
&P_{UVX^nY^n\bar{Z}^n}(u,v,x^n,y^n,z^n)\nonumber\\
=&P_{UVX^n}(u,v,x^n)P_{Y^n|X^n}(y^n|x^n)P_{Z^n|X^n}(z^n|x^n).
\end{align}
Define $\bar{U}_i=(Y^{i-1},\bar{Z}_{i+1}^n,U)$ and $V_i=V$. Then $(\bar{U}_i,V_i)-X_i-(Y_i,\bar{Z}_i)$ forms a Markov chain for each $i$. Moreover
\begin{align}\label{eq184}
I(V;Y_1^n|U)-I(V;\bar{Z}_1^n|U)=&\sum_{i=1}^n [I(V_i;Y_i|\bar{U}_i)-I(V_i;\bar{Z}_i|\bar{U}_i)]
\end{align}
holds, c.f. \cite[Lemma 4.1]{ahlswede1993common}. Next, observe that
\begin{align}
&I(U,V;X_1^n)-I(U,V;Y_1^n)\nonumber\\
=&\sum_{i=1}^n[I(U,V;X_i|X_{i+1}^n,Y^{i-1})-I(U,V;Y_i|X_{i+1}^nY^{i-1})]\label{st1}\\
=&\sum_{i=1}^n[I(U,V,X_{i+1}^n,Y^{i-1};X_i)-I(U,V,X_{i+1}^nY^{i-1};Y_i)]\label{st2}\\
=&\sum_{i=1}^n[I(U,V,X_{i+1}^n,Y^{i-1},\bar{Z}^n_{i+1};X_i)\nonumber\\
&\quad\quad\quad-I(U,V,X_{i+1}^nY^{i-1},\bar{Z}^n_{i+1};Y_i)]\label{st3}
\end{align}
\begin{align}
=&\sum_{i=1}^n[I(U,V,Y^{i-1},\bar{Z}^n_{i+1};X_i)-I(U,V,Y^{i-1},\bar{Z}^n_{i+1};Y_i)]\nonumber\\
&+\sum_{i=1}^n[I(X_{i+1}^n;X_i|U,V,Y^{i-1},\bar{Z}^n_{i+1})\nonumber\\
&\quad\quad\quad-I(X_{i+1}^n;Y_i|U,V,Y^{i-1},\bar{Z}^n_{i+1})]\\
\ge&\sum_{i=1}^n[I(U,V,Y^{i-1},\bar{Z}^n_{i+1};X_i)-
I(U,V,Y^{i-1},\bar{Z}^n_{i+1};Y_i)]\label{st4}\\
=&\sum_{i=1}^n[I(\bar{U}_i,V_i;X_i)-I(\bar{U}_i,V_i;Y_i)].\label{eq190}
\end{align}
Here, (\ref{st1}) is again by \cite[Lemma 4.1]{ahlswede1993common}, and (\ref{st2}) is from $(X_i,Y_i)\perp(X_{i+1}^n,Y^{i-1})$. Equality (\ref{st3}) is from $\bar{Z}_{i+1}^n-(U,V,X_{i+1}^n,Y^{i-1})-(X_i,Y_i)$.
Inequality (\ref{st4}) is because of $I(X_{i+1}^n;X_i|U,V,Y^{i-1},\bar{Z}^n_{i+1})=I(X_{i+1}^n;X_i,Y_i|U,V,Y^{i-1},\bar{Z}^n_{i+1})$.
Finally choose a random vector $U^n$ satisfying
\begin{align}
P_{U_i|V_iX_iY_iZ_i}(u_i|v_i,x_i,y_i,z_i)=P_{\bar{U}_i|V_iX_i}(u_i|v_ix_i).
\end{align}
Since $P_{U_iV_iX_iZ_i}=P_{\bar{U}_iV_iX_i\bar{Z}_i}$, we have
\begin{align}
I(V_i;Z_i|U_i)=&I(V_i;\bar{Z}_i|\bar{U}_i).\label{e185}
\end{align}
By the same token, we also have
\begin{align}
I(V_i;Y_i|U_i)=&I(V_i;Y_i|\bar{U}_i),\quad
I(U_i,V_i;Y_i)=&I(\bar{U}_i,V_i;Y_i),
\\
I(U_i,V_i;X_i)=&I(\bar{U}_i,V_i;X_i),\quad
I(V;Z_1^n|U)=&I(V;\bar{Z}_1^n|U).\label{e184}
\end{align}
The last five identities and \eqref{eq184}, \eqref{eq190} complete the proof.
\end{proof}
\bibliographystyle{ieeetr}
\bibliography{ref2014}
\end{document}